\documentclass[a4paper,english,notab]{eurocg26}

\usepackage{enumerate}
\newtheorem{observation}[theorem]{Observation}

\usepackage{cite} 
\usepackage{fnpct} 
\usepackage{microtype} 

\usepackage{nicematrix}
\usepackage[normalem]{ulem}
\usepackage{bm}
\usepackage{xspace}
\usepackage[disable]{todonotes}

\NewDocumentCommand{\red}{O{red}}{\textcolor{red}{#1}\xspace}
\NewDocumentCommand{\blue}{O{blue}}{\textcolor{blue}{#1}\xspace}
\NewDocumentCommand{\blockersrequired}{O{r}O{b}}{\ensuremath{\red[#1] \triangleright \blue[#2]}}
\DeclareMathOperator{\CH}{CH}
\newcommand{\G}{\mathcal{G}}
\renewcommand{\bowtie}{\ensuremath{\,\mathrm{bowtie}}}
\DeclareMathOperator{\cravat}{cravat}
\DeclareMathOperator{\skirt}{skirt}
\DeclareMathOperator{\pant}{pant}

\DeclareMathOperator{\necklace}{necklace}

\usepackage{etoolbox}
\newtoggle{long}\togglefalse{long} 

\iftoggle{long}{
	\NewDocumentEnvironment{prooflater}{m}{\begin{proof}}{\end{proof}}
	\NewDocumentEnvironment{proofsketch}{o +b}{}{}
	\newcommand{\restateref}[1]{}

	\NewDocumentEnvironment{statelater}{m}{}{}
	\NewDocumentCommand{\onlyShort}{+m}{}
	\NewDocumentCommand{\onlyLong}{+m}{#1}
}{
	\NewDocumentEnvironment{prooflater}{m +b}{ %
		\newcounter{#1-usages}\setcounter{#1-usages}{0}%
		\AtEndDocument{\ifnumequal{\value{#1-usages}}{0}{\todo[inline]{use prooflater `#1'}}{}}%
		\expandafter\global\expandafter\def\csname#1\endcsname{\stepcounter{#1-usages}\begin{proof}#2\end{proof}}%
	}{}
	\NewDocumentEnvironment{proofsketch}{O{Proof sketch.}}{\begin{proof}[#1]}{\end{proof}}
	\usepackage{apptools}
	\newcommand{\restateref}[1]{[\IfAppendix{\hyperref[#1]{$\star$}}{\hyperref[#1*]{$\star$}}]}

	\NewDocumentEnvironment{statelater}{m +b}{%
		\newcounter{#1-usages}\setcounter{#1-usages}{0}%
		\AtEndDocument{\ifnumequal{\value{#1-usages}}{0}{\todo[inline]{use statelater `#1'}}{}}%
		\expandafter\global\expandafter\def\csname#1\endcsname{\stepcounter{#1-usages}#2}%
	}{\ignorespacesafterend}
	\NewDocumentCommand{\onlyShort}{+m}{#1}
	\NewDocumentCommand{\onlyLong}{+m}{}
}

\let\oldrestatable\restatable
\def\restatable{\expandafter\oldrestatable}

\makeatletter
\pretocmd{\thmt@rst@storecounters}{\Hy@SaveLastskip}{}{}
\apptocmd{\thmt@rst@storecounters}{\Hy@RestoreLastskip}{}{}
\makeatother

\title{%
  Garment numbers\\ of bi-colored point sets in the plane%
}

\titlerunning{Garment numbers of bi-colored point sets in the plane}

\author[1]{Oswin Aichholzer}
\affil[1]{Institute of Algorithms and Theory, TU Graz, Austria\\
  \texttt{oswin.aichholzer@tugraz.at}}

\author[2]{Helena Bergold}
\affil[2]{Freie Universität Berlin, Germany\\
  \texttt{helena.bergold@fu-berlin.de}}

\author[3]{Simon D.~Fink}
\affil[3]{Algorithms and Complexity Group, TU Wien, Austria\\
  \texttt{sfink@ac.tuwien.ac.at}\\{\normalfont\footnotesize
  Vienna Science and Technology Fund (WWTF) grant [10.47379/ICT22029] and Austrian Science Fund (FWF) grant [10.55776/Y1329]}}

\author[4]{Maarten Löffler}
\affil[4]{Utrecht University, the Netherlands\\
  \texttt{m.loffler@uu.nl}}
\author[5,6]{Patrick Schnider}
\affil[5]{Department of Mathematics and Computer Science, University of Basel, Switzerland\\
  \texttt{patrick.schnider@unibas.ch}}
\affil[6]{Department of Computer Science, ETH Zürich, Switzerland\\
  \texttt{patrick.schnider@inf.ethz.ch}}

\author[7]{Josef Tkadlec}
\affil[7]{Computer Science Institute, Charles University, Prague, Czech Republic\\
  \texttt{josef.tkadlec@iuuk.mff.cuni.cz}\\{\normalfont\footnotesize
  GAČR project 25-17221S and Charles Univ.\ projects UNCE 24/SCI/008, PRIMUS 24/SCI/012}}

\authorrunning{O. Aichholzer, H. Bergold, S.\,D. Fink, M. Löffler, P. Schnider, J. Tkadlec} 

\makeatletter
\def\copyrightline{}
\gdef\@EventLongTitle{}
\gdef\@EventShortTitle{}
\makeatother

\begin{document}

\maketitle

\begin{abstract}
We consider colored variants of a class of geometric-combinatorial questions on $k$-gons and empty $k$-gons that have been started around 1935 by Erdős and Szekeres.
In our setting we have $n$ points in general position in the plane, each one colored either red or blue.
A \emph{structure} on $k$ points is a geometric graph where the edges are spanned by (some of) these points and is called \emph{monochromatic} if all $k$ points have the same color.
Already for $k=4$ there exist interesting open problems. Most prominently, it is still open whether for any sufficiently large bichromatic set
there always exists a \emph{convex} empty, monochromatic quadrilateral. In order to shed more light on the underlying geometry 
we study the existence of five different monochromatic structures that all use exactly 4 points of a bichromatic point set. 
We provide several improved lower and upper bounds on the smallest $n$ such that every bichromatic set of at least $n$ points contains (some of) those monochromatic structures.
\end{abstract}

\section{Introduction}

Around 1935 Erdős and Szekeres~\cite{erdosCombinatorialProblemGeometry1935} asked about the existence of a bound ${\mathrm{ES}}(k)$ such that any set of at least ${\mathrm{ES}}(k)$ points in general position in the plane contains a subset of $k$ points that are the vertices of a convex $k$-gon. This problem is also known as the ``Happy End Problem'', see e.g.~\cite{brassResearchProblemsDiscrete2005,hoffmanManWhoLoved1998} for further details. It started a whole area of geometric-combinatorial questions on $k$-gons and $k$-holes ($k$-gons which do not contain other points of the given set in their interior) in different settings and led to many beautiful results.

Variations when the points belong to different classes -- that are usually described as
\emph{colors} -- were introduced by Devillers, Hurtado, K{\'a}rolyi, and Seara~\cite{devillersChromaticVariantsErdosSzekeres2003}. Here we say that a (not necessarily convex or empty) structure is \emph{monochromatic} if all its points have the same color.
For sets of $n$ points in the plane with two colors (called \emph{bichromatic}), it has been shown that there are always at least
$\lceil \frac{n}{4} \rceil - 2$ monochromatic triangles with pairwise disjoint interiors~\cite{devillersChromaticVariantsErdosSzekeres2003}, and
at least
$\Omega(n^{4/3})$~\cite{pachMonochromaticEmptyTriangles2013} not-necessarily-disjoint empty monochromatic triangles.
On the other hand, Devillers et al. show that for $k \geq 5$ and any $n$ there are bichromatic sets of $n$ points where no empty monochromatic convex $k$-gon exists (Theorem 3.4 in~\cite{devillersChromaticVariantsErdosSzekeres2003}).
So somehow surprisingly the main remaining open problems in this setting are for sets of cardinality four.
Moreover, it has been shown~\cite{aichholzerLargeBichromaticPoint2010} that any sufficiently large (meaning with at least 2760 points) bichromatic set of points in the plane in general position determines an empty, monochromatic quadrilateral (which might not be convex).
While this statement sounds rather innocent its proof is non-trivial, and at the time when this result was shown the corresponding lower bound was just~18~\cite{aichholzerLargeBichromaticPoint2010}.
Moreover, the questions whether for a sufficiently large bichromatic set of points in the plane in general position there always exists a \emph{convex} empty, monochromatic quadrilateral is still open.
Here, the best known lower bound~is~46~\cite{koshelev2009erdosszekeresproblemrelatedproblems}.

Motivated by these results and with the goal to shed more light on the underlying structures we study the existence of combinations of different monochromatic structures on subsets of 4 points.
More precisely, we go beyond convex/non-convex polygons, 
and we introduce two new structures, the necklace and the bowtie; see~\Cref{fig:types} and the next section for a definition.
In total, we consider 5 different monochromatic structures that
all use exactly 4 points.
For combinations of these structures we study the following question:
What is the smallest $n$ such that every bichromatic set of $n$ points contains at least one of those empty monochromatic structures?


\section{Preliminaries}
Throughout this paper, let $P$ be a set of $n$ points in the plane in general position (that is, no three points in a set are collinear).
A \emph{structure} is a (closed) subset of the plane that is induced by 4 different points from $P$.
We consider five different structures, namely (see also~\Cref{fig:types}):

\begin{figure}[t]
  \centering
  \includegraphics[scale=1]{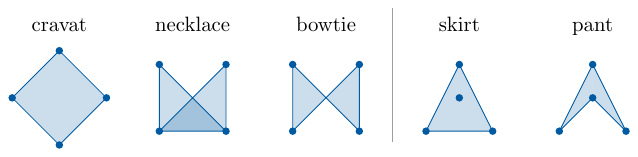}
  \caption{The five types of structures: cravat, necklace, bowtie, skirt, and pant.}
  \label{fig:types}
\end{figure}

\begin{enumerate}
    \item a \emph{cravat}\footnote{As in French ``à la cravate'', commonly also called necktie.}, defined by 4 points in convex position, is the convex hull of these 4 points.
    \item a \emph{necklace}, defined by 4 points in convex position, is the union of two triangles that share exactly two points that are consecutive on the convex hull.
    \item a \emph{bowtie}, defined by 4 points in convex position, is a self-intersecting 4-gon with those 4 points as vertices.
    \item a \emph{skirt}, defined by 4 points in non-convex position, is the convex hull of these 4 points.
    \item a \emph{pant}\footnote{The singular ``a pant'' is actually a viable alternative to ``pants'', especially in the \href{https://grammarphobia.com/blog/2015/11/pants.html}{fashion industry}, and combinatorially less confusing.}, defined by 4 points in non-convex position, is a simple 4-gon on these 4 points.
\end{enumerate}

Note that the first three structures have a convex hull of size 4, while the latter two have a convex hull of size 3.
Also note that each set of 4 points in convex position, depending on how they are ordered, induces 1~cravat, 4~necklaces, and 2~bowties; while in non-convex position it induces 1~skirt and 3~pants.

For a bichromatic point set $P$, we only consider structures that are monochromatic and for which there is no further point in $P$ of the same color within their boundary.
A point of the other color \textit{blocks} such a structure if it is within its boundary.
For a given bichromatic point set (colored, say, \red and \blue) and type of structure, we say that the \blue points \textit{block} the \red points if every \red structure contains at least one \blue point.
A structure that does not contain a point of the other color (and is thereby not blocked) is called \textit{empty}.
For a given set $S$ of structures, we say ``\emph{\red[$r$ red] points require \blue[$b$ blue] points}'' or shortly write \blockersrequired,
if every set of \red[$r$ red] points requires at least \blue[$b$ blue] points to block all \red structures in $S$.

We study whether all large enough bichromatic point sets contain an empty monochromatic structure.
Depending on the types of considered structure(s), we obtain different \emph{settings},
see \Cref{fig:settings} for an example.
Formally, given a set $S$ of structures, we define the \textit{GuARanteed Monochromatic Empty Not-quite-order Type number} (shorthand: Garment number) of bichromatic point sets, to be the smallest integer $n$, such that every bichromatic point set of size $n$ contains at least one empty monochromatic structure from $S$. (If no such integer exists, we set the garment number to $+\infty$.)
\Cref{fig:settings} illustrates that the Garment number of the set $S=\{\textrm{skirt, bowtie}\}$ is more than 7.
We denote this by $\G(\skirt\vee \bowtie)>7$.

\begin{figure}[t]
  \centering
  \includegraphics[width=\linewidth]{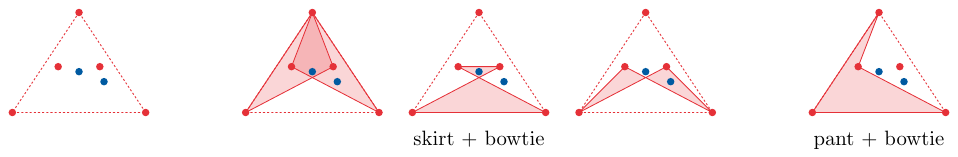}
  \caption{Left: A set of 7 points, 5 of them \red and 2 \blue. Middle: Each monochromatic skirt and bowtie contains a point of the other color, thus we have $\G(\skirt\vee \bowtie)>7$.
  Right: There is an empty pant, so the point set does not imply anything about the setting $\pant\vee\bowtie$.}
  \label{fig:settings}
\end{figure}

\begin{table}[t]
  \newcommand{\tablock}[2]{\Block[c]{}{\includegraphics[page=#1]{graphics/versions}\\#2}}
  \newcommand{\ours}[1]{\textcolor{orange}{#1}}
  \newcommand{\expl}[1]{\ensuremath{\bm{#1}}}
  \newcommand{\impl}[1]{\textit{#1}}
  \newcommand{\update}[2]{\sout{#1}\\\symb\ours{#2}}
  \newcommand{\uplow}[2]{\Block[c]{}{\diagbox{\def\symb{$>$}\symb#2}{\def\symb{$\leq$}\symb#1}}}
  \centering
  \begin{NiceTabular}{cc|c|c|c|c}[margin]
                &      & \tablock{6}{none} & \tablock{7}{cravat}                                 & \tablock{4}{necklace}                                     & \tablock{5}{bowtie}          \\ \Hline
  \tablock{6}{} & none   &  $\infty$       & \uplow{?}{\expl{46}}                                & \uplow{\ours{\expl{1508}}}{\ours{\expl{14}}}                               & \uplow{\ours{\impl{1508}}}{\ours{\impl{12}}}  \\ \Hline
  \tablock{2}{} & skirt  &  $\infty$       & \uplow{?}{\update{\impl{18}}{\expl{35}}}            & \uplow{\ours{\impl{1508}}}{\ours{\impl{12}}}                               & \uplow{\ours{\impl{1508}}}{\ours{\expl{12}}}  \\ \Hline
  \tablock{3}{} & pant   &  \expl{\infty}  & \uplow{\expl{2760}}{\update{\expl{18}}{\expl{22}}}  & \uplow{\update{\impl{2760}}{\expl{21}}}{\ours{\expl{12}}} & \uplow{\update{\impl{2760}}{\expl{11}}}{\ours{\expl{10}}}
  \end{NiceTabular}
  \quad
  \includegraphics[scale=1]{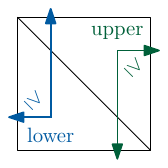}
  \smallskip
  \caption{
    The settings we consider in this paper, together with the upper and lower bounds on the number of points of a bichromatic point set in which an empty monochromatic structure always exists.
    Previously-known values are black, new results from this paper are orange.
    The results that are not a consequence of other results in this table are boldface and roman (i.e., non italic). Upper bounds are also upper bounds for settings below them or to their right, while lower bounds are also lower bounds for settings above them or to their left.
  }
  \label{fig:table}
\end{table}

\section{Upper Bounds}
We present upper bounds on Garment numbers in three different settings. They all include one structure with a convex hull of size 4, plus up to one structure with convex hull of size 3; see \Cref{fig:table}.
The first two of our upper bounds follow a similar three-step approach:

\begin{enumerate}
    \item By direct casework, we show that $\blockersrequired[r][b]$ for certain pairs $(\red[r],\blue[b])$ of small integers.
    \item Using an inductive argument, we show that if both $\blockersrequired[r-1][b-1]$ and $\blockersrequired[r][b]$, then $\blockersrequired[r+1][b+1]$.
    \item For large enough $n$, we locate a convex-like shape in $P$, and we find an empty structure there using $\blockersrequired[r+k][b+k]$, for suitable $k\ge 1$.
\end{enumerate}

In the rest of this section, we first present the inductive argument (\Cref{lem:induction}).
Then, we complete the proofs of the upper bounds for the settings $\{\textrm{pant, necklace}\}$ (\Cref{thm:necklace-ub}) and $\{\textrm{pant, bowtie}\}$ (\Cref{thm:bowtie-ub}). Finally, we prove \Cref{thm:necklace-only-ub} for the setting $\{\textrm{necklace}\}$.

Statements with a (clickable) $\star$ have their full proofs deferred to the appendix.\todo{full version}

\begin{figure}[t]
  \begin{subfigure}{.5\linewidth}
    \centering
    \includegraphics[page=1]{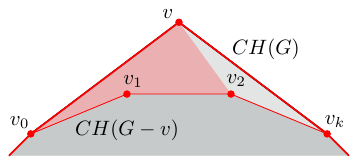}
    \caption{Case 1, where the convex hull does not shrink.}
    \label{fig:CHcuts-no-smaller}
  \end{subfigure}
  \begin{subfigure}{.5\linewidth}
    \centering
    \includegraphics[page=2]{graphics/CHcuts}
    \caption{Case 2, where the convex hull shrinks.}
    \label{fig:CHcuts-smaller}
  \end{subfigure}
  \caption{The two cases of \Cref{lem:induction}.}
  \label{fig:CHcuts}
\end{figure}
\begin{restatable}\restateref{lem:induction}{lemma}{lemInduction}\label{lem:induction}
  Let $S=\{\pant, \bowtie\}$ or $S=\{\pant, \necklace\}$.
  For all positive integers \red and \blue, it holds that
  $(\blockersrequired \;\land\; \blockersrequired[r-1][b-1]) \  \Rightarrow \  \blockersrequired[r+1][b+1]$.
\end{restatable}
\begin{proofsketch}
  We distinguish two cases depending on whether the boundary of the convex hull of the \red vertices contains a vertex whose removal yields a point set with at least that many \red points on the convex hull (Case 1), or not (Case 2); see \Cref{fig:CHcuts}.
\end{proofsketch}
\begin{prooflater}{proofInduction}
  Let $G=G_r\cup G_b$ be any point set, where $G_r$ is a set of \red[$r+1$ red] points and $G_b$ is a set of blue points. Suppose that all red structures in $S$ are blocked by $G_b$. We show that $|G_b|\ge b+1$.

  We distinguish two cases depending on whether the boundary $\CH(G_r)$ of the convex hull of the red vertices contains a vertex whose removal yields a point set with at least that many red points on the convex hull.


  \subparagraph*{Case 1:} $\exists v\in \CH(G_r) \colon |\CH(G_r-v)|\geq |\CH(G_r)|$.
  Let $v_0,v_1,\ldots,v_{k-1},v_k$ be the consecutive sequence of vertices on $\CH(G_r-v)$ that replaces the sequence $v_0,v,v_k$ on $\CH(G_r)$ when removing~$v$.
  Note that we have $k\geq2$ in this case.
  Observe that $P=v_0v_1v_2v$ is a pant -- an empty 4-gon with a reflex angle at $v_1$ in $G_r$; see \Cref{fig:CHcuts-no-smaller}.
  Since we assume \blockersrequired, we need \blue[$b$ blue] points to block $G_r-v$.
  Moreover, we need at least \blue[one blue] point to block the pant $P$.
  Thus $|G_b|\ge b+1$.

  \subparagraph*{Case 2:} $\forall v\in\CH(G_r) \colon |\CH(G_r-v)| < |\CH(G_r)|$.
  Observe that in this case we have $|\CH(G_r)|\geq4$.
  Let $v_0,u,v,v_k$ be four consecutive vertices of $\CH(G_r)$ that are replaced by the consecutive sequence $v_0,v_1,\ldots,v_{k-1},v_k$ on $\CH(G_r')$ with $k\geq0$ and $G_r'=G_r\setminus\{u,v\}$.
  Note that both $v_0,u,v$ and $u,v,v_k$ are triangles that contain no \red point in $G$, otherwise we would have case 1; see \Cref{fig:CHcuts-smaller}.
  The convex 4-gon $P=v_0v_1vu$ has to contain at least two \blue points,
  because if it contained only a single \blue point there would be a necklace or a bowtie on $P$ avoiding the \blue point.
  By assumption \blockersrequired[r-1][b-1], the polygon $\CH(G_r\setminus\{u,v\})$ contains at least \blue[$b-1$ blue] additional \blue[blue] points, so $|G_b|\ge b+1$.
\end{prooflater}

We now focus on the $\pant\vee\necklace$ setting.
The proof of the first following lemma is a relatively straightforward casework (see \Cref{fig:4pts-triangle-union}), while the second lemma can be shown by exhaustive enumeration on a computer.

\begin{figure}[t]
  \centering
  \includegraphics[width=\linewidth]{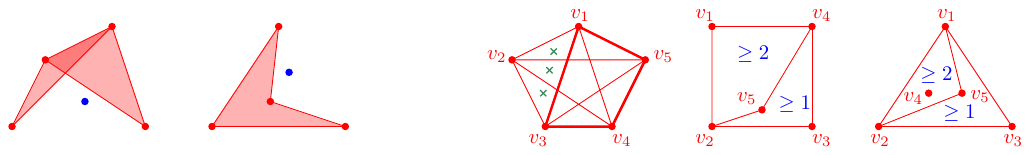}
  \caption{Left: We have \blockersrequired[ 4][2], regardless of whether the red quadrilateral is convex or not. Right: We have \blockersrequired[5][3], regardless of the size of the convex hull of the red points.}
  \label{fig:4pts-triangle-union}
\end{figure}

\begin{restatable}\restateref{lem:necklace-small}{lemma}{lemNecklaceSmall}\label{lem:necklace-small}
  For $S=\{\pant,\necklace\}$ we have \blockersrequired[4][2] and \blockersrequired[5][3].
\end{restatable}
\begin{prooflater}{proofNecklaceSmall}
  Regarding \blockersrequired[4][2], for contradiction consider a point set with 4 red points and just a single blue point. If the four red points are in convex position, there exists an empty red necklace, else there exists an empty red pant, see \Cref{fig:4pts-triangle-union} (left).
  Thus, \blockersrequired[4][2].

  Regarding \blockersrequired[5][3], for contradiction, suppose a set $G_r$ consisting of 5 red points can be blocked with just 2 blue points. We do a casework based on $|\CH(G_r)|$.

  \begin{enumerate}
      \item $|\CH(G_r)|=5$: Denote the red points by $v_1,\dots,v_5$ in the cyclical order. Since \blockersrequired[4][2], both blue points must lie inside $v_3v_4v_5v_1$, so $v_1v_2v_3$ is empty of blue points. Similarly, $v_2v_3v_4$ is empty. Thus a necklace on $v_1v_2v_3v_4$ is empty of blue points, a contradiction.
      \item $|\CH(G_r)|=4$: Denote the red points on the convex hull by $v_1,\dots,v_4$ in the cyclical order, and without loss of generality suppose that $v_5$ lies inside $v_2v_3v_4$.
      Since \blockersrequired[4][2], the convex quadrilateral $v_1v_2v_5v_4$ must contain at least two blue points, and (the interior-disjoint) pant on $v_2v_3v_4v_5$ must contain at least one more blue point, so at least 3 blue points are needed.
      \item $|\CH(G_r)|=3$: Denote the red points on the convex hull by $v_1,v_2,v_3$ and without loss of generality suppose that $v_4,v_5$ lie inside $v_1v_2v_3$ such that $v_2v_3v_5v_4$ is a convex quadrilateral with vertices in this order. As in the previous case, \blockersrequired[4][2] yields that triangle $v_1v_2v_5$ contains at least two blue points, and (the interior-disjoint) pant on $v_2v_3v_4v_5$ contains at least one more blue point, so at least 3 blue points are needed.\qedhere
  \end{enumerate}%
\end{prooflater}

\begin{restatable}\restateref{lem:6-island}{lemma}{lemSixIsland}\label{lem:6-island}
   Any set of $11$ points in general position contains either a convex 6-gon, or a convex 5-gon with at least 1 point inside of it.
\end{restatable}
\begin{prooflater}{proofSixIsland}
    For all $2\,334\,512\,907$ order types of size $11$~\cite{aichholzerAbstractOrderType2007} and each $k\le 11$, both the number of convex $k$-gons and the number of empty convex $k$-gons (also called $k$-holes) have been computed in previous work~\cite{aichholzerAbstractOrderType2007,order_webpage}.
    Using a computer program, we checked that each order type of size $11$ either has a positive number of 6-gons (and we are done), or it has strictly more $5$-gons than $5$-holes (thus some 5-gon contains at least 1 other point inside of it).
\end{prooflater}

We note that \Cref{lem:6-island} is minimal in the sense that there are sets of 10 points which contain neither a convex 6-gon nor a non-empty convex 5-gon (e.g.\ the double-chain with $5+5$~points).
This now allows us to show our first result.

\begin{theorem}\label{thm:necklace-ub}
  We have $\G(\pant\vee \necklace)\le 21$.
\end{theorem}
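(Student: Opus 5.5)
Take any bichromatic set of $21$ points. By pigeonhole one color class, say red, has at least $11$ points. The goal is to find an empty red pant or necklace, i.e.\ to show the blue points cannot block everything. Apply \Cref{lem:6-island} to $11$ of the red points. This gives one of two convex-like configurations, either a convex $6$-gon or a convex $5$-gon with at least one point inside. We then argue inside such a configuration, following step~3 of the three-step approach.

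First, use \Cref{lem:necklace-small} as the base of the induction in \Cref{lem:induction}. We have \blockersrequired[4][2] and \blockersrequired[5][3], so induction gives \blockersrequired[r][r-2] for all $r\ge 4$. In particular \blockersrequired[6][4] and \blockersrequired[7][5], and more generally \blockersrequired[11][9]. There are only $10$ blue points in total, so the $11$ red points alone need $9$ blue points; this is not yet a contradiction.

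The sharper count must come from the structure found by \Cref{lem:6-island}. The key point is to choose a red configuration $Q$ whose red structures are all contained in $\CH(Q)$, and then bound the blue points from both sides.

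\textbf{Convex $6$-gon case.} Take a minimal red convex $6$-gon $Q$ whose hull contains as few red points as possible; possibly we would instead choose a $6$-gon or $5$-gon island with few interior red points. The red points inside $\CH(Q)$ together with $Q$ form a set of $r'$ red points needing at least $r'-2$ blue points, and these blue points lie inside $\CH(Q)$. The remaining red points outside, together with what is left, must also be blocked. Applying \Cref{lem:induction}-style counting to the red points outside, or to the whole red set with the hull peeled, should give a total of at least $r-1$ blue points rather than $r-2$. Concretely: $11$ red points would force at least $10$ blue points, but only $10$ are available when red has exactly $11$ points. If red has more than $11$, blue has fewer, so the bound $r-1 > 21-r$ holds for all $r\ge 11$.

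\textbf{Convex $5$-gon with an interior point.} Similarly, this configuration should supply one extra forced blue point beyond the inductive count, by a local case analysis as in the proof of \Cref{lem:necklace-small}.

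The main obstacle is exactly this step: turning the $6$-gon or $5$-gon-with-interior-point into one extra blue point, i.e.\ proving \blockersrequired[r][r-1] in presence of such a configuration. I would first establish a strengthened base case by direct casework: \blockersrequired[6][5] for $6$ red points in convex position, and for $6$ red points forming a convex pentagon with one point inside. Then I would rerun the induction of \Cref{lem:induction}, peeling red points outside the configuration one or two at a time while keeping the configuration intact. If the base casework turns out weaker, I would try to reallocate: pick $k$ so that $r+k$ red versus $b+k$ blue exceeds the available $21-(r+k)$.
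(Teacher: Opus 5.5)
\textbf{There is a genuine gap.} Your preliminary count ($r\triangleright r-2$ from \Cref{lem:necklace-small} and \Cref{lem:induction}, which already settles $r\ge 12$) matches the paper. Everything after that is a plan rather than a proof, and the plan rests on a false base case.

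You try to show that the $11$ red points \emph{alone} force $10$ blue blockers, starting from $6\triangleright 5$ for six red points in convex position. That claim fails. Take the regular hexagon $v_k=(\cos\frac{(k-1)\pi}{3},\sin\frac{(k-1)\pi}{3})$, $k=1,\dots,6$, and the four blue points $\pm(0.6,0.1)$ and $\pm(0.45,0.6)$. A necklace on a convex quadrilateral $Q$ is $Q$ minus one of the four triangles cut out by its diagonals. So all four necklaces of $Q$ are blocked as soon as $Q$ contains two blue points separated by one of its diagonals. A direct check shows this holds for all $15$ red quadrilaterals, and there are no red pants. So one-sided counting is already stuck at $r-2$ for $r=6$.

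Since every $11$-point set contains your configuration (\Cref{lem:6-island}), your plan amounts to a one-sided bound $11\triangleright 10$ for arbitrary red sets, and nothing in your sketch establishes it. The proposed induction also needs the configuration to survive the removal of hull vertices, which Case~2 of \Cref{lem:induction} does not guarantee.

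The missing idea is that blocking is two-sided. With $r=11$ and $b=10$, the blue pants and necklaces must also be blocked by red points, and the paper uses exactly this.
\begin{enumerate}
\item Let $I$ be the set of red points inside the convex $6$-gon, or the non-empty convex $5$-gon, given by \Cref{lem:6-island}. Then $I$ is a red island with $|I|\ge 6$ and $|\CH(I)|\ge 5$.
\item By $r\triangleright r-2$, the set $J$ of blue points in $\CH(I)$ satisfies $|J|\ge |I|-2\ge 4$, and $J$ is a blue island.
\item Applying the same bound with colors swapped, $J$ needs at least $|J|-2\ge |I|-4$ red points in the interior of $\CH(J)$, hence in the interior of $\CH(I)$.
\item But only $|I|-|\CH(I)|\le |I|-5$ red points lie there, a contradiction.
\end{enumerate}
No casework beyond \Cref{lem:necklace-small} is needed.
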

\begin{proof}
  Towards a contradiction, suppose there exists a bichromatic set of $n\ge 21$ points where each monochromatic pant and necklace are blocked.
  Consider the set $G$ of the first 21 points in $x$-sorted order.
  Suppose $G$ has $r$ red and $b$ blue points.
  W.l.o.g. assume $r>b$.
  By \Cref{lem:necklace-small} we have \blockersrequired[4][2] and \blockersrequired[5][3], thus by \Cref{lem:induction} we have \blockersrequired[12][10].
  If $r\geq 12$ then $b \leq 21-12=9<10$, a contradiction.
  Thus, we can assume that $r=11$ and $b=10$.

  By \Cref{lem:6-island}, the set of red points contains a subset $I$ of size $|I|\ge 6$ that has at least $|\CH(I)|\ge 5$ points on the boundary of its convex hull.
  By \Cref{lem:necklace-small,lem:induction}, $I$ contains a set $J$ of at least $|J|\ge |I|-2\ge 4$ blue points in the interior of its convex hull $\CH(I)$, and the set $J$ in turn contains at least $|J|-2\ge |I|-4$ red points in the interior of its convex hull $\CH(J)$.
  But this is a contradiction -- since $|\CH(I)|\ge 5$, at most $|I|-5$ red points may be in the interior of $\CH(I)$.
\end{proof}

To use induction for the upper bound in the $\pant\vee\bowtie$ setting, we again need a lemma for small point sets that uses some case work we defer to the appendix.\todo{full version}
\begin{restatable}\restateref{lem:bowtie-small}{lemma}{lemBowtieSmall}\label{lem:bowtie-small}
  For $S=\{\pant,\bowtie\}$ we have \blockersrequired[6][5] and \blockersrequired[7][6].
\end{restatable}

\begin{theorem}\label{thm:bowtie-ub}
  We have $\G(\pant\vee \bowtie)\le 12$.
\end{theorem}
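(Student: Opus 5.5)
Suppose, for contradiction, that some bichromatic set $P$ with $n\ge 12$ points has every monochromatic pant and bowtie blocked. As in the proof of \Cref{thm:necklace-ub}, restrict to any 12 of its points, say the first 12 in $x$-sorted order. A structure that is empty in $P$ is also empty in a subset, so all monochromatic pants and bowties of the subset are still blocked. Let the subset have $r$ red and $b$ blue points with $r+b=12$, and assume without loss of generality that $r\ge b$, so $r\ge 6$.

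The main tool is \Cref{lem:bowtie-small}, which gives \blockersrequired[6][5] and \blockersrequired[7][6]. Applying \Cref{lem:induction} for $S=\{\pant,\bowtie\}$ with $(r,b)=(7,6)$ and $(r-1,b-1)=(6,5)$ yields \blockersrequired[8][7]. Iterating gives \blockersrequired[$k$][$k-1$] for every $k\ge 6$. The counting then goes by cases on $r$.
\begin{itemize}
\item If $r=6$, then $b=6\ge 5$, so no contradiction arises yet.
\item If $r=7$, then $b=5<6$, contradicting \blockersrequired[7][6].
\item In general, if $r\ge 7$, then $b=12-r\le 5<r-1$, again a contradiction with \blockersrequired[$r$][$r-1$].
\end{itemize}
So the only surviving case is $r=b=6$.

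In the balanced case, both color classes are 6-point sets each blocked by exactly 6 points of the other color. I would try to mimic the final step of \Cref{thm:necklace-ub}. First, take a convex-hull-based nested argument: the red points are blocked, so by \blockersrequired[6][5] at least 5 blue points lie in regions covered by red structures. Symmetrically, at least 5 red points lie in blue structures. I would analyze where these blockers must lie. The blue points blocking red structures lie in $\CH(\text{red})$, so at least 5 blue points are inside $\CH(\text{red})$. Symmetrically, at least 5 red points are inside $\CH(\text{blue})$.

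Next, consider the convex hull of all 12 points. Its vertices are extreme, and a red hull vertex cannot be interior to $\CH(\text{blue})$. So $\CH(\text{red})$ has at most one vertex outside $\CH(\text{blue})$, and vice versa. Since each hull has at least 3 vertices, this forces a contradiction: the global convex hull consists of at most 2 points (at most one red and one blue extreme point). But a convex hull needs at least 3 vertices, so this is impossible.

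The step I expect to need care is justifying that every blocking blue point lies inside $\CH(\text{red})$. This holds because every pant and bowtie on red points is contained in the convex hull of its 4 red points, and hence in $\CH(\text{red})$. Points outside $\CH(\text{red})$ block nothing, so removing them preserves blocking. Hence \blockersrequired[6][5] counts blue points inside $\CH(\text{red})$, and the argument goes through.
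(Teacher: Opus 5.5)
Your proof is correct and follows essentially the paper's route. The induction from \blockersrequired[6][5] and \blockersrequired[7][6] disposes of $r\ge 7$, and your balanced-case count is the contrapositive of the paper's step of deleting two same-colored hull vertices of the 12 points and contradicting \blockersrequired[6][5]; your count says at most one point of each color lies outside the other color's hull, so the convex hull of all 12 points would have at most two vertices.

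One fix is needed in the reduction step. It is not valid for \emph{any} 12 points, and your stated reason runs the wrong way. What you need is that a structure empty in the subset is also empty in $P$. This holds for the 12 leftmost points because no other point of $P$ lies in their convex hull.
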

\begin{proof}
    Given any bichromatic point set $P$ of $n\ge 12$ points, consider the set $P'$ of the 12 leftmost points.
    W.l.o.g. suppose $r\ge 6$ of them are red and $b=12-r$ are blue. If $r\ge 7$, then $b\le 5$ thus there is an unblocked red pant or bowtie by \Cref{lem:bowtie-small,lem:induction}.
    So suppose $r=b=6$.
    The boundary of $\CH(P')$ contains at least 3 vertices, w.l.o.g.\ at least 2 of them are blue. Removing them, we obtain a set of 6 red points and 4 blue points, thus again by \Cref{lem:bowtie-small,lem:induction} there is a red unblocked pant or bowtie.
\end{proof}



Using exhaustive enumeration on a computer, one can actually show the stronger statement that all bichromatic point sets of size 11 contain an empty pant or bowtie.
Together with~\Cref{obs:lb} from below, this establishes the following corollary.

\begin{restatable}\restateref{cor:bowtie-equal}{corollary}{corBowtieEqual}\label{cor:bowtie-equal}
    We have  $\G(\pant\vee \bowtie)= 11$.
\end{restatable}
\begin{prooflater}{proofBowtieEqual}
    For contradiction, suppose that there exists a set $P$ of 11 points with all pants and bowties blocked. Suppose there are $r$ \red points and $b$ \blue points and w.l.o.g. suppose $r>b$. If $r\ge 7$ then $b\le 4$ and we get a contradiction with \blockersrequired[7][6] (\Cref{lem:bowtie-small}). Thus, suppose $r=6$ and $b=5$.

    We claim that $P$ must have a fully \red convex hull of size 3, and a second \blue convex layer of size 4 (and then 3 more \red points and 1 \blue point).
    Once this claim is proved, we conclude as follows: All order types with 11 points have been generated~\cite{aichholzerAbstractOrderType2007}.
    Using a computer we checked all bichromatic point sets that fulfill these properties, and we showed that each of them contains an empty pant or bowtie.

    It remains to prove the claim.
    Consider the boundary $\CH(P)$ of the convex hull of $P$. If $P$ contains a \blue point $v$, then the set $P\setminus\{v\}$ has six \red points and 4 \blue points, thus it contradicts \blockersrequired[6][5] (\Cref{lem:bowtie-small}).
    So all points in $\CH(P)$ are \red.
    
    If $|\CH(P)|\ge 4$, then the set $P\setminus \CH(P)$ has at most 2 \red points and precisely 5 \blue points, thus it contradicts \blockersrequired[5][3] (with colors reversed, \Cref{lem:necklace-small}). Thus $|\CH(P)|=3$.

    Let $Q=P\setminus \CH(P)$. Then $Q$ has $5$ \blue points and 3 \red points. If the boundary $\CH(Q)$ of the convex hull of $Q$ contains a red point $v$ then $Q\setminus v$ has 5 \blue points and 2 \red points, thus again contradicting \blockersrequired[5][3] (with colors reversed). Thus all points in $\CH(Q)$ (the ``second convex layer'') are blue.

    Finally, 
    from the beginning of the proof of \Cref{lem:bowtie-small} we know that for a set $R$ of 5 \red points we have \blockersrequired[5][4], unless $|\CH(R)|=4$. Reversing the colors and noting that $Q$ has 5 \blue points and only 3 \red points, we conclude that $Q$ must satisfy $|\CH(Q)|=4$.
    This concludes the proof of the claim.
\end{prooflater}

\NewDocumentCommand\counterexample{mmo}{
  \begin{subfigure}{.45\linewidth}
    \centering
    \IfNoValueTF{#3}{
      \includegraphics[width=\linewidth]{graphics/counterexamples/n#1_c2_no_mc_#2}
      \caption{#1 points with no #2.}
      \label{fig:lb-#2}
    }{
      \includegraphics[width=\linewidth]{graphics/counterexamples/n#1_c2_no_mc_#2_#3}
      \caption{#1 points with no #2 or #3.}
      \label{fig:lb-#2-#3}
    }
  \end{subfigure}
}
\begin{figure}[p]
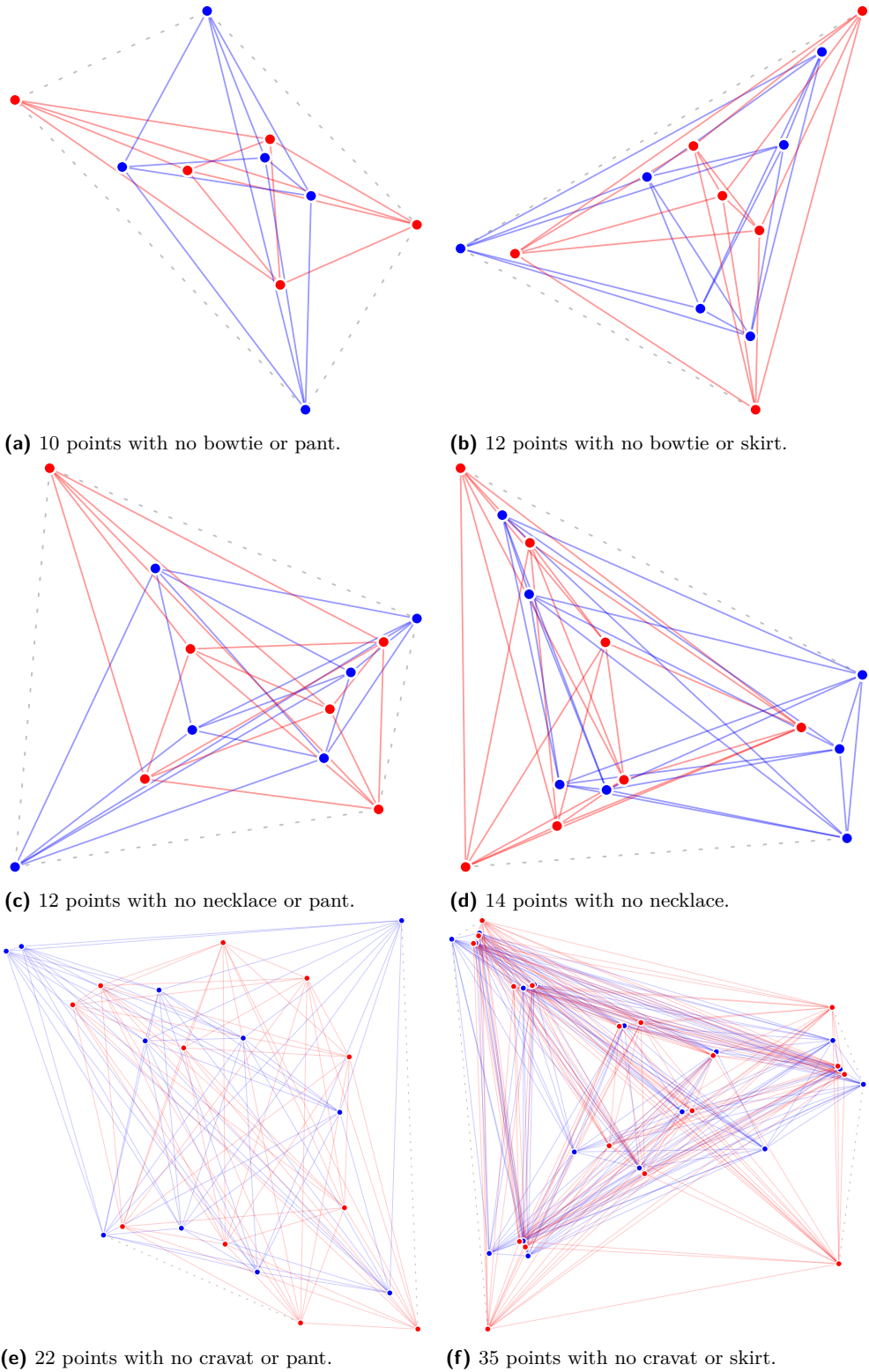

  \centering
  \counterexample{10}{bowtie}[pant]
  \counterexample{12}{bowtie}[skirt]
  \counterexample{12}{necklace}[pant]
  \counterexample{14}{necklace}
  \counterexample{22}{cravat}[pant]
  \counterexample{35}{cravat}[skirt]
  \caption{Our lower bound constructions, all avoiding certain empty monochromatic structures.}
  \label{fig:counterexamples}
\end{figure}

Our third upper bound also finds an empty structure within a large enough convex shape.

\begin{restatable}\restateref{thm:necklace-only-ub}{theorem}{thmNecklaceUb}\label{thm:necklace-only-ub}
  We have $\G(\necklace)\le 1508$.
\end{restatable}
\begin{proofsketch}
    We first use the Erd\H{o}s–Szekeres theorem~\cite{erdosCombinatorialProblemGeometry1935} on convex $k$-gons
    to find a bicolored subset which consists of, say, at least five more blue than red points.
    Then we use a result on counting the number of interior disjoint convex blue 4-gons to show that
    there are not enough red points to double-block them all.
    Thus at least one blue convex 4-gon generates a blue necklace.    
\end{proofsketch}

\section{Lower bounds}\label{sec:lb}
Our lower bounds follow from examples of point sets without empty monochromatic structures.
More specifically, a lower bound of the form $\G(S)>x$ follows from a configuration of $x$ points that contains no empty monochromatic $S$ shown in \Cref{fig:counterexamples}.

\begin{observation}\label{obs:lb}
  The following lower bounds hold:\\
  \setlength{\tabcolsep}{.5ex}
  \begin{tabular}{@{\hspace{2em}} rll @{\hspace{2em}} rll @{\hspace{2em}} rll}
      (a) & $\G(\bowtie\vee\pant)$   & $>10$ & 
      (c) & $\G(\necklace\vee\pant)$ & $>12$ & 
      (e) & $\G(\cravat\vee\pant)$   & $>22$   
      \\
      (b) & $\G(\bowtie\vee\skirt)$  & $>12$ & 
      (d) & $\G(\necklace)$          & $>14$ & 
      (f) & $\G(\cravat\vee\skirt)$  & $>35$   
  \end{tabular}
\end{observation}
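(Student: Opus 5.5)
The statement is a list of six lower bounds, each of the form $\G(S)>x$. Each one follows from a single explicit bichromatic configuration of $x$ points that contains no empty monochromatic structure from $S$. So the plan is to exhibit, for each of (a)--(f), the configuration drawn in \Cref{fig:counterexamples}, with concrete integer coordinates and colors. I would then verify the blocking property directly.

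For each configuration the check is finite and mechanical. I would enumerate all monochromatic 4-subsets $Q$ and first classify each $Q$ as convex or non-convex using orientation tests.
\begin{itemize}
\item If $Q$ is convex: for a cravat, check that the convex hull contains a point of the other color. For a necklace, check each of the 4 necklaces, i.e.\ the unions of two triangles sharing a hull edge. For a bowtie, check both self-intersecting orderings, each a union of two triangles meeting at the crossing point.
\item If $Q$ is non-convex: for a skirt, check the hull triangle. For a pant, check each of the 3 simple 4-gons, i.e.\ the hull triangle minus one of the three triangles at the reflex vertex.
\end{itemize}
Each region is a union of at most two triangles, so the test reduces to point-in-triangle predicates. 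One subtlety is the clause ``no further point of the same color within the boundary''. A structure containing a same-colored point is not counted at all, so only structures free of same-colored points need to be checked for blocking.

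For $x\le 14$ this could be done largely by hand, using a convex-layer description of the set. Two reductions help here. Any pant or necklace region contains a triangle of $Q$, and if that triangle already contains a same-colored point, the structure is not counted. Every cravat or skirt region also contains each pant or necklace region on the same four points, so blocking the smaller region suffices.

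For the larger configurations (e) with 22 points and (f) with 35 points, exhaustive hand casework is impractical. I would rely on a computer check over all $\binom{35}{4}$ subsets using exact integer arithmetic; general position is also verified by exact orientation tests. The main obstacle is not the verification but finding the configurations. I would find them by a search: start from structured sets such as double chains and nested convex layers of alternating colors, then perturb locally or run simulated annealing, minimizing the number of empty monochromatic structures until it reaches zero. (a) is also implied as a sanity check by the computer enumeration behind \Cref{cor:bowtie-equal}, which pins the exact value at 11.
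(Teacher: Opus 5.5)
Your proposal matches the paper's argument: each bound is witnessed by the explicit configuration in \Cref{fig:counterexamples}, verified by a finite check (done by computer in the paper). Your description of the regions to test for each structure type is accurate. Drop the closing ``sanity check'', though. The enumeration behind \Cref{cor:bowtie-equal} only shows that every 11-point set contains an empty pant or bowtie, and the corollary obtains $\G(\pant\vee\bowtie)>10$ from \Cref{obs:lb}(a) itself, so invoking it for (a) is circular.
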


The exact point positions, together with code for verifying their correctness, can be found on GitHub\cite{github}.
The set of 22 points was found by students Severin Kann, J\"urgen Pammer, and Matthias Platzer; the set of 35 points by Christian Payer and Karlheinz Wohlmuth.


\section{Conclusion}

We consider this work as a starting point of a more systematic exploration of blocking the various 4-point structures of bichromatic point sets. As seen in \Cref{fig:table}, many open questions remain.
Next to the already-known open question of further closing the gap between lower and upper bounds for the pant and cravat setting, the main open part is finding any upper bounds for settings without pants.
The main difference between settings with pants and those with only skirts, is that blocking a pant requires at least two points of the other color, while a skirt requires only one.

\subsection*{Acknowledgments}
We would like to thank Philipp Kindermann as organizer of GG Week 2024 in Trier where this work was initiated.

\bibliography{references}

\appendix

\section{Upper Bounds with Pants}
\lemInduction*
\label{lem:induction*}
\proofInduction

\lemNecklaceSmall*
\label{lem:necklace-small*}
\proofNecklaceSmall

\lemSixIsland*
\label{lem:6-island*}
\proofSixIsland

\lemBowtieSmall*
\label{lem:bowtie-small*}
\begin{proof}
First, note that since each necklace contains a bowtie, by \Cref{lem:necklace-small} we have \blockersrequired[4][2] and \blockersrequired[5][3].
Moreover, we claim that if a set $P$ of $|P|=5$ points has either 3 or 5 points on the boundary $\CH(P)$ of its convex hull, then even 4 blockers are required. For contradiction, suppose that 3 blockers suffice.
\begin{enumerate}
\item Case $|\CH(P)|=3$. Label the points as in \Cref{fig:5ptCH3}.

\begin{figure}[h]
  \centering
  \includegraphics{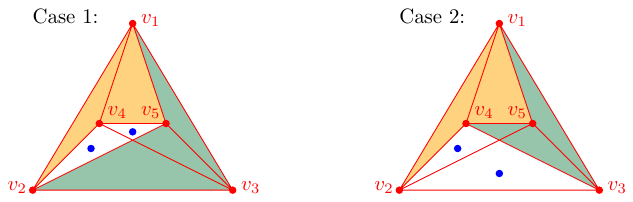}
  \caption{\blockersrequired[5][4] for 5 red points with 3 of them on the convex hull.}
  \label{fig:5ptCH3}
\end{figure}

The quadrilateral $\{v_2,v_3,v_4,v_5\}$ needs at least two blue points for both of its bowties to be blocked, and there are two ways to do this (up to symmetry). In both cases we can find two disjoint unblocked bowties, which cannot be blocked simultaneously by the last blue point (see \Cref{fig:5ptCH3}), a contradiction.

\item Case $|\CH(P)|=5$. Label the points as in \Cref{fig:5ptCH5}.

  \begin{figure}[h]
  \centering
  \includegraphics[width=\textwidth]{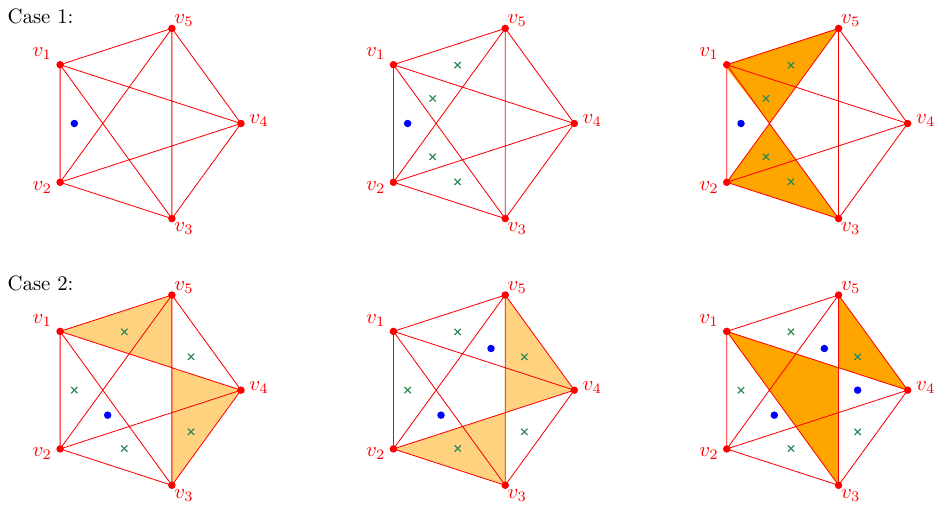}
  \caption{\blockersrequired[5][4] for 5 red points in convex position.}
  \label{fig:5ptCH5}
\end{figure}

Suppose that there is a blue point in the cell incident to $v_1v_2$. This is Case 1 in \Cref{fig:5ptCH5}. Then both the quadrilaterals spanned by $\{v_2,v_3, v_4, v_5\}$ and $\{v_1,v_3, v_4, v_5\}$ need the remaining two blue points to be blocked. In particular, there cannot be an additional blue point in the triangles $\{v_1, v_2, v_3\}$ and $\{v_1, v_2, v_5\}$ (cells where there are no blue points are marked with green crosses in \Cref{fig:5ptCH5}). But then the quadrilateral $\{v_1,v_2, v_3, v_4\}$ contains an unblocked bowtie.

We can thus assume that no cell incident to a convex hull edge contains a blue point. Then one of the cells incident only to a vertex must contain a blue point, say, the one incident to $v_2$. This is Case 2 in \Cref{fig:5ptCH5}. In order to block the bowtie in the quadrilateral $\{v_1,v_3, v_4, v_5\}$ that is not incident to the convex hull edge, we must place another blue point in a cell incident to a vertex, without loss of generality incident to $v_5$. Analogously, to block the same bowtie in the quadrilateral $\{v_2,v_3, v_4, v_5\}$ we must place a blue point without loss of generality in the cell incident to $v_4$. But now we have placed 3 blue points and the quadrilateral $\{v_1,v_3, v_4, v_5\}$ still contains an unblocked bowtie. This shows that if the 5 red points are in convex position, we need at least 4 blue points to block all red bowties.
\end{enumerate}

With this in mind, we proceed to show \blockersrequired[6][5].
Let $P$ be a red set of $|P|=6$ points and for contradiction, suppose that all bowties and pants can be blocked using only 4 blue points.
We again do a casework based on the number of points on the boundary $\CH(P)$ of the convex hull of $P$:
\begin{enumerate}
\item Case $|\CH(P)|=6$. Label the points as in \Cref{fig:6ptCH}. Blocking $v_2v_3v_4v_5v_6$ requires 4 points, so $v_1v_2v_6$ is empty. Likewise for $v_1v_2v_3$, thus there is an empty bowtie $v_1v_3v_2v_6$.

\begin{figure}[h]
  \centering
  \includegraphics[width=\textwidth]{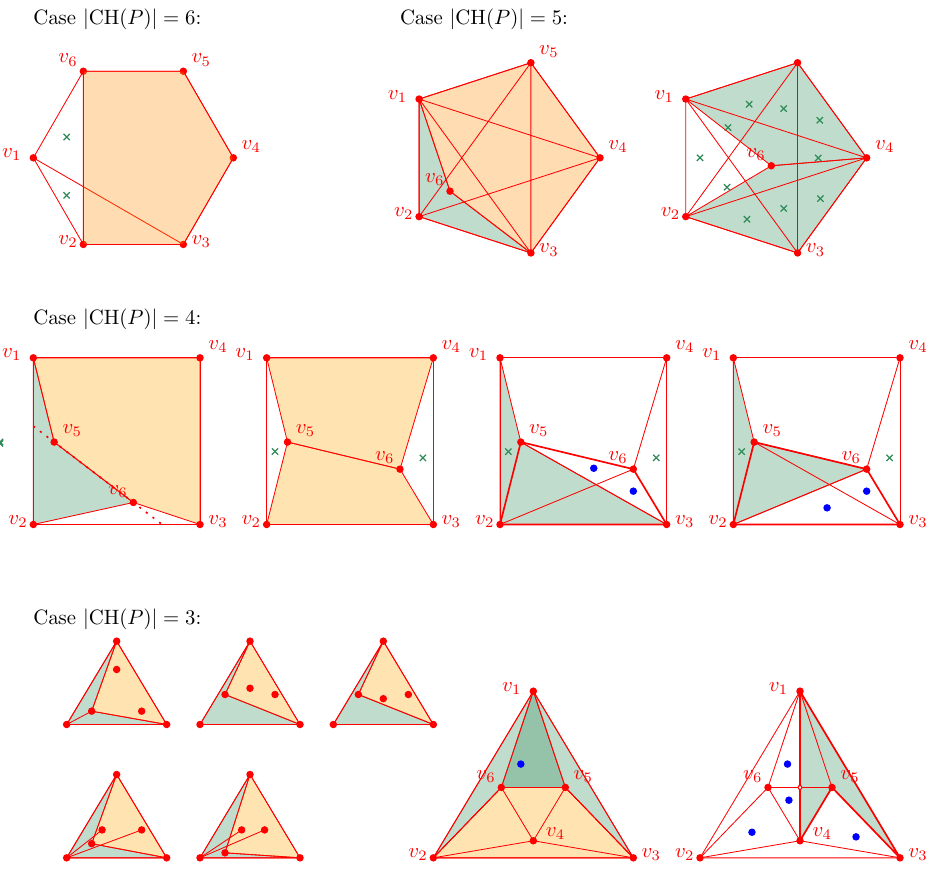}
  \caption{\blockersrequired[6][5] depending on the size of the convex hull.}
  \label{fig:6ptCH}
\end{figure}

\item Case $|\CH(P)|=5$. Label the convex hull vertices by $v_1v_2v_3v_4v_5$ and the 6th point by~$v_6$. If $v_6$ lies in one of the ``boundary'' triangles $v_iv_{i+1}v_{i+2}$ (with indices taken mod~5), then the convex hull of $P$ can be split into a pant $v_iv_{i+1}v_{i+2}v_6$ and the remaining interior-disjoint pentagon, thus $1+4=5$ blockers are required, see \Cref{fig:6ptCH}. So suppose $v_6$ lies in the central pentagon, see \Cref{fig:6ptCH}.

Split $v_1v_2v_3v_4v_5$ into quadrilaterals $v_1v_6v_4v_5$, $v_2v_3v_4v_6$ and a triangle $v_1v_2v_6$. Since \blockersrequired[4][2], each quadrilateral requires 2 blockers, thus the triangle is empty. By the same logic, $v_2v_3v_6$ is empty, a contradiction.

\item Case $|\CH(P)|=4$. Label the convex hull vertices by $v_1v_2v_3v_4$ and the other two points by $v_5$, $v_6$. If the line $v_5v_6$ crosses two adjacent sides of $v_1v_2v_3v_4$ then $v_1v_2v_3v_4$ contains a pant and an interior-disjoint pentagon, thus $1+4=5$ blockers are required, see \Cref{fig:6ptCH}. So wlog suppose that $v_5v_6$ crosses $v_1v_2$ and $v_3v_4$ so that quadrilaterals $v_1v_5v_4v_6$ and $v_2v_3v_5v_6$ are convex. Each of them requires two blockers, so triangles $v_1v_2v_5$ and $v_3v_4v_6$ are both empty. There are two ways to block the two bowties in $v_2v_3v_5v_6$ using 2 blue points (up to symmetry). In both cases, we can find an empty quadrilateral, see \Cref{fig:6ptCH}. (Note that $v_1v_2v_3v_5$ is either a pant, or a convex quadrilateral, which contains a bowtie.)

\item Case $|\CH(P)|=3$. In this case, there are 6 possible order types. In five of those 6 cases, it is possible to split the convex hull into a pant and a set of 5 points with convex hull of size 3, thus requiring $1+4=5$ blockers. In the final, 6th case label the points as in \Cref{fig:6ptCH}.

A quadrilateral $v_2v_3v_5v_6$ requires 3 blockers, so the final blocker must be in $v_1v_6v_5$ to block both pants $v_1v_2v_6v_5$ and $v_1v_6v_5v_3$. Similarly, there is a blocker in $v_2v_4v_6$ and in $v_3v_5v_4$. The 4th blocker must guarantee that each of $v_1v_6v_4v_5$ and $v_2v_4v_5v_6$ has 2 blockers, thus it must be in $v_4v_5v_6$. Let $p=v_1v_4\cap v_5v_6$. To block both bowties in $v_1v_6v_4v_5$, wlog assume that there is one blocker in each of $v_1v_6p$ and $v_4pv_6$. Then $v_1v_4v_5v_3$ is an empty pant, a contradiction.
\end{enumerate}

Finally, we show \blockersrequired[7][6]. The argument is as in the proof of \Cref{lem:induction}.
Let $P$ be a set of $|P|=7$ points and let $h_1,h_2,\dots,h_k$ be the $k\in\{3,4,5,6,7\}$ vertices on the boundary of its convex hull $\CH(P)$. For contradiction, suppose that $P$ can be blocked using only 5 points.

\begin{enumerate}
\item If some ``boundary triangle'' $h_ih_{i+1}h_{i+2}$ (with indices taken mod $k$) contains at least one other point $p\in P$, then by \blockersrequired[6][5] the convex hull of $P\setminus\{h_{i+1}\}$ requires at least 5 blockers, and there is an interior-disjoint pant that requires 1 additional blocker (see Case 1 in \Cref{lem:induction}). Note that this takes care of cases $k\in\{3,4\}$.
\item If all ``boundary triangles'' $h_ih_{i+1}h_{i+2}$ are empty, we aim to find $i$ such that  $P'=P\setminus\{h_i,h_{i+1}\}$ is a convex 5-gon. This suffices, since $P'$ then requires 4 blockers, and there is an interior-disjoint convex 4-gon with side $h_ih_{i+1}$ that requires additional 2 blockers (see Case 2 in \Cref{lem:induction}). We distinguish cases based on $k$: 
\begin{enumerate}
\item When $k=7$, any $i$ yields a 5-gon $P'$.

\item When $k=6$, consider a main diagonal $h_1h_4$ and wlog assume that the 7th point $v_7$ lies inside $h_1h_2h_3h_4$. Then $i=2$ yields a 5-gon.
\item When $k=5$, the last two points $v_6$, $v_7$ both lie in the central pentagon. The line $v_6v_7$ cuts two non-consecutive sides, wlog $h_1h_2$ and $h_3h_4$. Then $i=2$ yields a 5-gon.\qedhere
\end{enumerate}

\end{enumerate}
\end{proof}

\corBowtieEqual*
\label{cor:bowtie-equal*}
\proofBowtieEqual

\section{Upper Bound for only Necklace}
An island $I$ of a point set $P$ is a subset of points of $P$ such that $Conv(I) \cap P = I$, that is, the island $I$ contains all points of $P$ that lie inside the convex hull of $I$. Note that if $I$ contains any of the structures shown in~\Cref{fig:types} than this is independent of the points from $P$ outside of $I$.
If an island $I$ is bicolored with $r$ red and $b$ blue points we say that $I$ is unbalanced if $|r-b| \geq 5$, that is, the larger color class has at least 5 points more than the smaller color class.

\begin{lemma}\label{lem:unbalanced}
  Any bicolored set of at least $1508$ points contains an unbalanced island.
\end{lemma}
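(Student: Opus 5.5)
Throughout, we assume for contradiction that $P$ is a bicolored set of $n\ge 1508$ points in which every island $I$ satisfies $|r(I)-b(I)|\le 4$. The aim is to combine two kinds of islands. The first kind are the cheap, ``one-dimensional'' islands: for any direction, the points of $P$ in a slab between two parallel lines form an island, because the convex hull of the points in a slab stays inside the slab. The second kind are ``two-dimensional'' islands spanned by large convex polygons, which we obtain from the Erdős–Szekeres theorem~\cite{erdosCombinatorialProblemGeometry1935}. These are the two ingredients named in the proof sketch of \Cref{thm:necklace-only-ub}.

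\emph{Step 1 (slab balance).} We sort $P$ by $x$-coordinate and consider the $\pm1$ walk whose step is $+1$ for red and $-1$ for blue. Every contiguous block of this order is an island. So if no island is unbalanced, every partial sum of every block lies in $[-4,4]$, and the whole walk stays in a window of width $4$. The same holds for every direction. In particular, $r$ and $b$ each lie within $4$ of $n/2$, so each color class has at least $(n-4)/2 \ge 752$ points. Consequently every halfplane, and every slab, contains an almost equal number of red and blue points.

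\emph{Step 2 (a large monochromatic convex polygon).} The color class with at least $752$ points, say red, is large enough that the bound $\mathrm{ES}(k)\le\binom{2k-4}{k-2}+1$ gives a red convex $k$-gon $K$ with $k=7$ or $k=8$. I expect the constant $1508$ to be tuned exactly to make this step work, possibly after first restricting to a suitable slab from Step~1 so that the red majority survives. The island we then study is $I=\CH(K)\cap P$.

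\emph{Step 3 (decomposing $\CH(K)$ into islands).} The point set of each convex piece of a decomposition of $\CH(K)$ is itself an island, for example the triangles of a fan triangulation or the ``ears'' $h_ih_{i+1}h_{i+2}$. Each piece must therefore be balanced up to $4$. Its vertices, however, are red, and these vertices are shared between pieces. To turn this into a contradiction, I would choose the pieces nested rather than disjoint: peel off ears one at a time, and compare consecutive islands $I_j\supset I_{j+1}$ that differ by one red hull vertex together with the points of an ear. Adding red vertices one by one forces the blue points to ``keep up''. I would try to show that the imbalance grows by one with each red vertex peeled from $K$, unless an ear contains an extra blue point. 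Ears containing extra blue points can in turn be cut off by a line, giving halfplane islands whose balance is controlled by Step~1. Summing the imbalance over the $k\ge 7$ hull vertices should then produce an island with $|r-b|\ge 5$.

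\emph{Main obstacle.} The difficult part is the bookkeeping in Step~3. The vertex counts of the pieces overlap, and the interior points of $\CH(K)$ (of either color) can compensate for the red vertices. If the nested-ear argument does not close, I would fall back on induction on the size of the convex hull, choosing $K$ to be inclusion-minimal among red convex $k$-gons. The intent is that its interior contains few red points, so that the blue points needed for balance are concentrated there; applying Step~1 to a slab through $\CH(K)$ would then give the contradiction.
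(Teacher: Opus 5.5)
Your proposal has a genuine gap. Step~3 is where all the work lies, and it is only a plan; you flag the bookkeeping as unresolved yourself. Beyond that, the route through a \emph{monochromatic} polygon has an obstruction you do not address.

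The fact you are missing is elementary. If $I$ is an island and $v$ is a vertex of $\CH(I)$, then $I\setminus\{v\}$ is again an island: its hull lies inside $\CH(I)$, and $v$, being extreme, is not in it. The removal changes $b-r$ by exactly $\pm 1$. There is no need to cut off ears together with their points, which is what creates your overlapping counts.

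With this fact in hand, you can also see why a red convex $7$- or $8$-gon $K$ is not enough. (With the classical bound you quote, $752$ points only give $k=7$, since $\binom{12}{6}+1=925$.) Peeling red hull vertices only moves $b-r$ towards blue, and $K$ has no blue hull vertices to peel in the other direction. So if the island $\CH(K)\cap P$ is red-heavy with $b'-r'=-4$, eight removals only reach $b-r=4$. Step~1 also gives nothing beyond $|r-b|\le 4$ for $P$ itself, which is immediate because $P$ is an island.

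The paper avoids colors in the Erd\H{o}s--Szekeres step altogether. The number $1508=\binom{13}{7}-\binom{10}{6}+2$ is the Mojarrad--Vlachos bound for $\mathrm{ES}(9)$, applied to all of $P$ rather than to one color class. Take the island $I$ spanned by a convex $9$-gon of arbitrary colors, and assume w.l.o.g.\ $b'-r'=i$ with $0\le i\le 4$ (otherwise $I$ is already unbalanced).
\begin{itemize}
\item If at least $5-i$ of the nine hull vertices are red, removing $5-i$ of them leaves an island with $b-r=5$.
\item Otherwise at most $4-i$ are red, so at least $5+i$ are blue. Removing $5+i$ blue ones leaves an island with $r-b=5$.
\end{itemize}
Since $(4-i)+(5+i)=9$, nine hull vertices are exactly what this dichotomy needs. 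That is why a convex $9$-gon of mixed colors, rather than a monochromatic $7$- or $8$-gon, is the right object.
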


\begin{proof}
    Let $P$ be any bichromatic point set of size $n = 1508$ with $r$ red and $b$ blue points. We use the Erd\H{o}s–Szekeres theorem~\cite{erdosCombinatorialProblemGeometry1935} to find an unbalanced subset $I$ of~$P$.
    As already mentioned in the introduction, this theorem says that for every integer $k$ there is a number ${\mathrm{ES}}(k)$ such that any set of at least ${\mathrm{ES}}(k)$ points in general position in the plane contains a subset of $k$ points that are the vertices of a convex $k$-gon.
    Recently, Mojarrad and Vlachos~\cite{MojarradVlachos} provided an upper bound ${\mathrm{ES}}(k) \leq \binom{2k-5}{k-2} - \binom{2k-8}{k-3}+2$ for any $k \geq 6$.
    Thus ${\mathrm{ES}}(9)\leq 1508$ which implies that any set of at least $1508$ points contains a convex 9-gon. (For larger $k$, stronger upper bounds exist, see e.g.~\cite{sukErdosSzekeresConvexPolygon2017}.)
    
    Let $I$ be the island of $P$ that has this 9-gon as the boundary of the convex hull of $I$.
    Let $r'$ and $b'$ be the number of red and blue points of~$I$, respectively. If $|r'-b'| \geq 5$ we already have our unbalanced island. Otherwise we assume w.l.o.g.\ that there are at least as many blue points as red points and that $b'=r'+i$ for some $0 \leq i < 5$. Now if there are at least $5-i$ red points on the boundary of the convex hull of $I$ we remove $5-i$ of these red points from $I$ to obtain an unbalanced island $I'$. Else there are at least $9-(4-i)=5+i$ blue points on the boundary of the convex hull of $I$. Remove these $5+i$ blue points from $I$ to obtain the island $I'$. $I'$ now contains $b''=b'-(5-i)=r'+i-(5-i)=r'-5$ blue points and is thus an unbalanced island, as $r'=b''+5$.
\end{proof}

Note that it is straightforward to generalize \Cref{lem:unbalanced} to $k$-unbalanced islands by considering convex $(2k-1)$-gons in the Erd\H{o}s–Szekeres theorem.

\thmNecklaceUb*
\label{thm:necklace-only-ub*}
\begin{proof}
By \Cref{lem:unbalanced} any bicolored set of at least $1508$ points contains an unbalanced island~$I$. 
W.l.o.g.\ assume that $I$ consists of $b$ blue and $r$ red points and that there are more blue than red points, that is, $b \geq r+5$. To prove our statement we use Theorem~2 of~\cite{quadrilaterals2019} which says that any set $T$ of $m$ points in general position in the plane always admits a family of $\lfloor \frac{m-3}{2} \rfloor$ interior disjoint convex 4-holes. Let $T$ be the set of blue points of $I$ with $m=b$. Then there are at least $\lfloor \frac{b-3}{2} \rfloor$ interior disjoint blue convex 4-gons which do not contain any other blue point in their interior. To place two red points inside each such convex blue 4-gon we would need at least $2\lfloor \frac{b-3}{2} \rfloor \geq b-4$ red points. As $r \leq b-5$ there exists at least one convex blue quadrilateral in $I$ (and thus $P$) which contains at most one red point and therefor this quadrilateral forms a blue necklace.
\end{proof}

\end{document}